\documentclass[12pt]{article}
\usepackage{amsmath,amssymb,amsthm,bm}
\usepackage{mathbbol,bbm}
\usepackage[final]{showkeys}
\usepackage{verbatim}
\newcommand{\Cx}{{\mathbb C}}
\newcommand{\Ir}{\mathbb{Z}}
\newcommand{\Nl}{\mathbb{N}}

\newcommand{\Rl}{{\mathbb R}}

\newcommand{\idty}{\mathbb{1}}

\DeclareMathOperator{\ran}{ran}

\DeclareMathOperator{\spa}{span}
\DeclareMathOperator*{\tr}{Tr}
\newcommand{\<}{\langle}
\renewcommand{\>}{\rangle}

\providecommand{\norm}[1]{\lVert#1\rVert}

\renewcommand{\c}[1]{\mathcal{#1}}
\newcommand{\g}[1]{\mathfrak{#1}}
\newcommand{\s}[1]{\mathsf{#1}}
\renewcommand{\r}[1]{\mathrm{#1}}

\newtheorem{lemma}{Lemma}
\newtheorem{proposition}{Proposition}

\setlength{\parskip}{6pt plus 1pt}
\setlength{\parindent}{0pt}

\begin{document}

\begin{center}
{\LARGE Correlations in Free Fermionic States} \\[12pt]
M.~Fannes and J.~Wouters \\[6pt]
Instituut voor Theoretische Fysica \\
K.U.Leuven, B-3001 Heverlee, Belgium
\end{center}

\medskip\noindent
\textbf{Abstract}

\noindent 
We study correlations in a bipartite, Fermionic, free state in terms of perturbations induced by one party on the other. In particular, we show that all so conditioned free states can be modelled by an auxiliary Fermionic system and a suitable completely positive map.

\section{Introduction}
\label{s1}

The rich structure of multi-partite quantum states that arises from the interplay between probability and locality leads to many interesting concepts, features, and problems~\cite{NC}. To determine whether a bipartite state is entangled or not or to understand the nature of entanglement in higher dimensional systems turn out to be very hard problems~\cite{HHHH}. In this paper we consider the influence of perturbations of one party on the other. The states that arise in this way could be called conditional states even if the classical notion of conditioning cannot be extended to the quantum~\cite{T}. More precisely, we work out the structure of conditional state spaces for fermionic systems, imposing additionally Gaussianity.  

We consider here mostly finite dimensional algebras of observables $\c A$, these can always be taken to be unital sub-algebras of some complex matrix algebra $\c M$, closed under Hermitian conjugation. Such algebras are direct sums of full matrix algebras and therefore encompass both classical systems with finite state spaces and fully quantum systems with a finite number of accessible levels. The space of complex linear functionals on $\c A$ is denoted by $\c A^*$ and the pairing between a functional $\varphi$ and an observable $A$ by $\varphi \cdot A$. Because of the finite dimensions $(\c A^*)^* = \c A$. The state space of $\c A$ is the convex subset $\c S(\c A)$ of normalized, positive, linear functionals. The term state therefore means expectation functional rather than wave function as in standard quantum mechanics.

We now consider bipartite systems. The observables of both parties form algebras $\c A_1$ and $\c A_2$ and of the composite system $\c A_{12} = \c A_1 \otimes \c A_2$. Product states are of the form 
\begin{equation}
\< A_1 \otimes A_2 \>_{12} = \< A_1 \>_1\, \< A_2 \>_2,
\end{equation} 
they describe statistical independence. Generally, subsystems will be correlated and this is encoded in conditional state spaces
\begin{align}
&\c S_1 := \bigl\{ A_1 \mapsto \< A_1 \otimes A_2 \>_{12} \,\mid\, A_2 \in \c A_2^+,\ \< \idty_1 \otimes A_2 \>_{12} = 1 \bigr\} \enskip\text{and} \\
&\c S_2 := \bigl\{ A_2 \mapsto \< A_1 \otimes A_2 \>_{12} \,\mid\, A_1 \in \c A_1^+,\ \< A_1 \otimes \idty_2 \>_{12} = 1 \bigr\}.
\end{align}  
$\c S_i$ is a compact convex subset of $\c S(\c A_i)$.
 
We may also consider the linear spaces of functionals
\begin{align}
&\c V_1 := \bigl\{ A_1 \mapsto \< A_1 \otimes A_2 \>_{12} \,\mid\, A_2 \in \c A_2 \bigr\} \enskip\text{and} \\
&\c V_2 := \bigl\{ A_2 \mapsto \< A_1 \otimes A_2 \>_{12} \,\mid\, A_1 \in \c A_1 \bigr\}.
\end{align}  
As any element in a C*-algebra is a linear combination of at most four positive elements $\c V_i$ is spanned by $\c S_i$. Mostly, $\c S_i$ is a proper subset of the space of positive normalized functionals in $\c V_i$. 

A state $\<\ \>_{12}$ of a composite system is a linear map from $\c A_2$ to $\c A_1^*$
\begin{equation}
S: A_2 \in \c A_2 \mapsto \bigl( A_1 \in \c A_1 \mapsto \< A_1 \otimes A_2\> \bigr) \in \c A_1^*.
\end{equation} 
This map is, moreover, positive. The transposed map $S^{\s T}$ from $\c A_1$ to $\c A_2^*$ 
\begin{equation}
S^{\s T}(A_1) \cdot A_2 = S(A_2) \cdot A_1,\enskip A_i \in \c A_i
\end{equation}
simply swaps the parties. As the rank of a map and its transpose are equal 
\begin{equation}
\dim\bigl( \c V_1 \bigr) = \dim\bigl( \c V_2 \bigr) =: n.
\end{equation}
The natural number $n$ is the correlation dimension of $\<\ \>_{12}$. 

The conditional state $A_1 \mapsto \< A_1 \otimes A_2 \>$ on $\c A_1$ can now be written in the form
\begin{equation}
\< A_1 \otimes A_2 \>_{12} = S^{\s T}(A_1) \cdot A_2 = S^{\s T}(A_1) \cdot B
\label{deco}
\end{equation}
for a suitably chosen $B$ from $S^{\s T}(\c A_1)^*$, i.e., we have modelled the conditional states on $\c A_1$ by a $n$-dimensional space. However, $B$ does not have to be positive which makes~(\ref{deco}) not very useful. 

Consider a bipartite system with fully quantum parties, i.e., $\c A_i = \c M_i$ where $\c M_i$ is a full matrix algebra of dimension $d_i$. The general finite dimensional situation can be handled by decomposition in a direct sum of full matrix algebras. A state of the composite system is given by a density matrix $\rho_{12}$ of dimension $d_1 d_2$
\begin{equation}
\< A_{12} \>_{12} = \tr \bigl( \rho_{12} A_{12} \bigr),\enskip A_{12} \in \c M_1 \otimes \c M_2.
\end{equation}  
Let $d_3$ be the dimension of the range of $\rho$, then the GNS-construction yields an essentially unique normalized vector $\Omega \in \Cx^{d_1} \otimes \Cx^{d_2} \otimes \Cx^{d_3}$ such that
\begin{equation}
\< A_{12} \>_{12} = \< \Omega \,,\, A_{12} \otimes \idty_3\, \Omega \>,\enskip A_{12} \in \c M_1 \otimes \c M_2.
\end{equation}
We now perform the Schmidt decomposition of $\Omega$ with respect to $\c M_2$ and $\c M_1 \otimes \c M_3$
\begin{equation}
\Omega = \sum_{j=1}^m c_j \, \Omega_{2\,j} \otimes \Omega_{13\, j}.
\end{equation}
Here $c_j > 0$ and $\{ \Omega_{2\,j} \}$ and $\{ \Omega_{13\, j} \}$ are orthonormal families in $\Cx^{d_2}$ and $\Cx^{d_1} \otimes \Cx^{d_3}$. The conditional states on $\c M_1$ are convex combinations of conditional states defined by a rank one operator in $\c M_2$. These are of the form
\begin{equation}
\begin{split}
A_1 
&\mapsto \< \Omega \,,\, A_1 \otimes |\eta\>\<\eta| \otimes \idty_3\, \Omega\> \\
&= \sum_{k,\ell = 1}^m \< \Omega_{2\,k} \,,\, \eta \>\, \< \eta \,,\, \Omega_{2\, \ell} \>\, \< \Omega_{13\,k} \,,\, A_1 \otimes \idty_3\, \Omega_{13\, \ell} \> \\
&= \< \xi \,,\, A_1 \otimes \idty_3\, \xi \>.
\end{split}
\end{equation}
Here $\xi$ is a normalized vector in $\spa\bigl( \{ \Omega_{13\, j} \} \bigr)$. Moreover, any normalized $\xi$ can be reached by an appropriate choice of $\eta$. Therefore, the conditional states are of the form
\begin{equation}
A_1 \mapsto \< \xi \,,\, A_1 \otimes \idty_3\, \xi\>,\enskip \xi \in \spa\bigl( \{ \Omega_{13\, j} \} \bigr). 
\end{equation}
Picking an isometry $V$ from $\spa\bigl( \{ \Omega_{13\, j} \} \bigr)$ to $\Cx^m$ we obtain the following manifestly positive model for the conditional states
\begin{equation}
A_1 \mapsto \tr \bigl( \rho\, \Gamma(A_1) \bigr),\enskip \rho \text{ density matrix on } \Cx^m 
\label{deco2}
\end{equation}
and
\begin{equation}
\Gamma(A_1) = V\, A_1 \otimes \idty_3\, V^*.
\end{equation}
The map $\Gamma$ is completely positive and identity preserving. This description of conditional states fits in the general setting of generalized subsystems of~\cite{AFP}.

For a pure state $\<\ \>_{12}$ on $\c M_1 \otimes \c M_2$ defined by a normalized vector $\Omega_{12}$ the forms~(\ref{deco}) and~(\ref{deco2}) are actually very similar. We can identify the dual of $\c M$ with $\c M$ and use the pairing 
\begin{equation}
\varphi \cdot A = \tr (\varphi A).
\end{equation}
Writing the Schmidt decomposition
\begin{equation}
\Omega_{12} = \sum_{i=1}^p r_i^{\frac{1}{2}}\, e_i \otimes f_i,\enskip r_i > 0,
\end{equation}
we easily compute for $A_1 \in \c M_1$
\begin{equation}
S^{\s T}(A_1) = \sum_{k,\ell=1}^p r_k^{\frac{1}{2}}\, r_\ell^{\frac{1}{2}}\, \< e_k \,,\, A_1 e_\ell \>\, |f_\ell\>\<f_k|.
\end{equation}

It is not hard to verify that
\begin{equation}
\c S_2 = \bigl\{ S^{\s T}(A_1) \bigm| A_1 \ge 0 \enskip\text{and}\enskip \< A_1 \otimes \idty_2 \>_{12} = 1 \bigr\}
\end{equation}
is affinely isomorphic to the state space of the $p$-dimensional complex matrices $\c M_p$.
We see therefore that the correlation dimension $n$ is $p^2$. A general conditional state on $\c A_1$ 
\begin{equation}
A_1 \mapsto \< A_1 \otimes A_2 \>_{12},\enskip A_2 \ge 0 \enskip\text{and}\enskip \< \idty_1 \otimes A_2 \>_{12} = 1
\end{equation}
can then be written as
\begin{equation}
A_1 \mapsto \tr \bigl( S^{\s T}( A_1 ) B \bigr) 
\label{decopure}
\end{equation}
for a suitable $B \in \bigl( S^{\s T}(\c A_1) \bigr)^*$. As $S^{\s T}(\c A_1)$ is the full state space of the $n$-dimensional matrices we must have that 
\begin{equation}
B \ge 0 \enskip\text{and}\enskip \tr \bigl( S^{\s T}(\idty_1) B \bigr) = 1.
\end{equation}
This means that~(\ref{decopure}) is manifestly positive. To obtain the equivalence with the form~(\ref{deco2}) we use the transposition with respect to the basis $\{f_j\}$ of $\Cx^p$:
\begin{equation}
\begin{split}
\tr \bigl( S^{\s T}(A_1) B \bigr) 
&= \tr \bigl( B^{\s T} \bigl(  S^{\s T}(A_1) \bigr)^{\s T} \bigr) = \tr \bigl( B^{\s T} \Lambda(A_1) \bigr) \\
&= \tr \Bigl( \Lambda(\idty_1)^{\frac{1}{2}} B^{\s T} \Lambda(\idty_1)^{\frac{1}{2}} \Gamma(A_1) \Bigr)
\end{split}
\end{equation} 
with
\begin{align}
&\Lambda(A_1) = \sum_{k,\ell = 1}^p r_k^{\frac{1}{2}} r_\ell^{\frac{1}{2}} \< e_k \,,\, A_1 e_\ell \>\, |f_k\> \<f_\ell| \enskip\text{and} \\
&\Gamma(A_1) = \Lambda(\idty)^{-\frac{1}{2}} \Lambda(A_1) \Lambda(\idty_1)^{-\frac{1}{2}}.
\end{align}

An extension of these ideas was made in the context of translation invariant states on quantum spin chains. A family of states called finitely correlated states or also matrix product states was studied~\cite{FNW}. The pure ones turn out to be ground states of VBS-models and they are also useful as trial states in numerical computations~\cite{VC}. To construct such a state on a quantum spin chain $\otimes^\Ir \c A$ (where the single site algebra $\c A$ is typically a matrix algebra) an auxiliary finite dimensional algebra $\c B$ is introduced together with a unity preserving completely positive map
\begin{equation}
\s E: \c A \otimes \c B \to \c B.
\end{equation} 
Introducing the super-operators 
\begin{equation}
\s E_A: \c B \to \c B: B \mapsto \s E(A \otimes B),\enskip A \in \c A 
\end{equation}
and assuming for simplicity that repeated actions of $\s E_{\c A}$ on $\idty_{\c B}$ span the whole of $\c B$ and that $\idty_{\c B}$ is the unique eigenvector of $\s E_{\idty}$ with eigenvalue one, there exists a unique state $\rho$ on $\c B$ that satisfies
\begin{equation}
\rho(B) = \rho \bigl(\s E_{\idty}(B) \bigr),\enskip B \in \c B.
\end{equation}
The restrictions of the chain state $\omega$ to subsets of contiguous points are 
\begin{equation}
\omega(A_m \otimes A_{m+1} \otimes \cdots \otimes A_n) = \rho\Bigl( \s E_{A_m} \circ \s E_{A_{m+1}} \circ \cdots \circ \s E_{A_n} \bigl( \idty_{\c B} \bigr) \Bigr).
\end{equation}
The conditional states on the right half-chain $\otimes^{\Nl_0} \c A$ are then modelled by
\begin{equation}
X \mapsto \sigma(\Gamma(X))
\end{equation} 
where $\sigma$ is an arbitrary state on $\c B$ and
\begin{equation}
\Gamma: \otimes^{\Nl_0} \c A \to \c B: \Gamma(A_1 \otimes A_2 \otimes \cdots \otimes A_n) = \s E_{A_1} \circ \s E_{A_2} \circ \cdots \circ \s E_{A_n} (\idty).
\end{equation}

\section{Free fermionic states}
\label{s2}

Quantum states are mostly indirectly given, typically as ground or equilibrium states for a given interaction. Bosonic or fermionic free, quasi-free, Gaussian, or determinantal states are an exception, their two-point expectations are specified and the state is computed on general elements by applying a simple combinatorial rule. We shall restrict our attention to fermionic systems and compute conditional states within the free context. A general reference to this section is~\cite{BR}.

The CAR-algebra $\c A(\c H)$ --- CAR stands for canonical anti-commutation relations --- with one mode Hilbert space $\c H$ is the C*-algebra generated by an identity $\idty$ and by creation and annihilation operators $a^*$ and $a$ that satisfy
\begin{align}
&\varphi \in \c H \mapsto a^*(\varphi)\ \text{is complex linear} \\
&\{ a(\varphi) \,,\, a(\psi) \} = 0 \enskip\text{and}\enskip \{ a(\varphi) \,,\, a^*(\psi) \} = \< \varphi \,,\, \psi \>\, \idty. \label{CAR}
\end{align}   
An orthogonal decomposition $\c H = \c H_1 \oplus \c H_2$ turns $\c A(\c H)$ into a composite system with parties $\c A(\c H_i)$ up to a minor modification: $\c A(\c H_i)$ sits as a graded tensor factor in $\c A(\c H)$ through the natural identification $a^*(\varphi_i) \mapsto a^*(\varphi_i \oplus 0)$. This is due to the fact that odd elements in $\c A(\c H_1 \oplus 0)$ anti-commute with odd elements in $\c A(0 \oplus \c H_2)$. To simplify notation we shall often write $a^*(\varphi)$ instead of $a^*(\varphi \oplus 0)$. 

There is a representation from U(1) in the group $\{ \alpha_z \mid z \in \r{U(1)}\}$ of gauge automorphisms of $\c A(\c H)$ 
\begin{equation}
z \in \r{U(1)} \mapsto \alpha_z \enskip\text{with}\enskip \alpha_z(a^*(\varphi)) = z a^*(\varphi).
\end{equation}
It's fixed point algebra is the GICAR-algebra --- gauge-invariant CAR ---, it is generated as a linear space by monomials in creation and annihilation operators of the form $a^*(\varphi_1) \cdots a^*(\varphi_n) a(\psi_n) \cdots a(\psi_1)$.

A gauge-invariant free state $\omega_Q$ on $\c A(\c H)$ is determined by a symbol which is a linear operator $Q$ on $\c H$ satisfying $0 \le Q \le \idty$. The $\omega_Q$-expectations of all monomials vanish except for
\begin{equation}
\omega_Q \bigl( a^*(\varphi_1) a^*(\varphi_2) \cdots a^*(\varphi_n) a(\psi_n) \cdots a(\psi_2) a(\psi_1) \bigr) = 
\det\Bigl( \bigl[ \bigl\< \psi_k \,,\, Q\, \varphi_\ell \bigr\> \bigr] \Bigr).
\end{equation}

A different approach will prove useful here, see~\cite{D,DFP} for more details. The second quantization map 
\begin{equation}
\Gamma: \c T_1(\c H) \to \c A(\c H): \Gamma(A) := \sum_{k,\ell} \< e_k \,,\, A\, e_\ell\>\, a^*(e_k) a(e_\ell)
\end{equation}
takes a trace class operator $A \in \c T_1(\c H)$ to an element $\Gamma(A)$ in $\c A(\c H)$ that is independent of the chosen orthonormal basis $\{e_i\}$ of $\c H$. This map is complex-linear, continuous, and satisfies 
\begin{equation}
\frac{1}{2}\, \norm A_1 \le \norm{\Gamma(A)} \le \norm A_1.
\end{equation}
It is, moreover, completely positive and for a positive $A \in \c T_1(\c H)$
\begin{equation}
\norm{\Gamma(A)} = \tr A.
\end{equation}

In~\cite{DFP} a map $\r E$ from the Fredholm operators $\idty + \c T_1(\c H)$ to $\c A(\c H)$ is considered that satisfies
\begin{align}
&\r E(X) \r E(Y) = \r E(XY), \\
&\r E(X)^* = \r E(X^*),\enskip \text{and} \\
&\r E(\exp A) = \exp\bigl( \Gamma(A) \bigr),\enskip A \in \c T_1(\c H). 
\end{align}
This map obeys for positive trace-class $A$ the bounds
\begin{align}
&1 + \norm A_1 \le \norm{\r E(\idty + A)} \le \exp \bigl( \norm A_1 \bigr) \enskip\text{and} \\ 
&\norm{\r E(\idty + A) - \idty} \le \exp \bigl( \norm A_1 \bigr) - 1.
\end{align}
A gauge-invariant free state $\omega_Q$ can then be characterized by 
\begin{equation}
\label{gi-free}
\omega_Q(\r E(X)) = \det(\idty - Q + QX),\enskip X \in \idty + \c T_1(\c H).
\end{equation}

A state $\omega$ on $\c A(\c H)$ is even if it vanishes on all monomials in creation and annihilation operators with an odd number of factors. Gauge-invariant states are automatically even. If $\omega_i$ is an even state on $\c A(\c H_i)$ for $i=1,2$, then there exists a unique state $\omega_1 \wedge \omega_2$ on $\c A(\c H_1 \oplus \c H_2)$ such that
\begin{equation}
(\omega_1 \wedge \omega_2)(X_1 X_2) = \omega_1(X_1)\, \omega_2(X_2),\enskip X_i \in \c A(\c H_i).
\end{equation}
A symbol $Q$ induces an orthogonal decomposition
\begin{equation}
\c H = \c H_0 \oplus \tilde{\c H} \oplus \c H_{\idty}
\end{equation}
where
\begin{equation}
\c H_0 = \ker(Q) \enskip\text{and}\enskip \c H_{\idty} = \ker(\idty - Q)
\end{equation}
and $\omega_Q$ factorizes into
\begin{equation}
\omega_Q = \omega_0 \wedge \omega_{\tilde Q} \wedge \omega_{\idty} \enskip\text{with}\enskip \tilde Q = Q \bigr|_{\c H}.
\end{equation}
The states $\omega_0$ on $\c A(\c H_0)$ and $\omega_{\idty}$ on $\c A(\c H_{\idty})$ are pure, they are Fock and anti-Fock states.
 
We now consider a free state on a bipartite fermionic system $\c A(\c H_1 \oplus \c H_2)$ defined by a symbol $Q$ with block matrix structure
\begin{equation}
Q = \begin{bmatrix} A &B \\ B^* &C \end{bmatrix}.
\label{sym}
\end{equation}
The aim is to characterize all free states on $\c A(\c H_1)$ that arise as conditional states. More precisely, to characterize
\begin{equation}
\begin{split}
S_1^{\text{free}} = \Bigl\{ \omega_{\tilde A} \,\Bigm|\, 
&\omega_{\tilde A} \text{ is a free state on } \c A(\c H_1) \text{ and} \\
&\exists \text{ a gauge-invariant } Y \in \c A(\c H_2) \text{ such that} \\
&\omega_{\tilde A}(X) = \omega_Q(XY),\ X \in \c A(\c H_1) \Bigr\}.
\end{split}
\label{cond}
\end{equation}

From the positivity conditions $Q \ge 0$ and $Q \le \idty$ of the symbol~(\ref{sym}) on $\c H_1 \oplus \c H_2$ it immediately follows that
\begin{equation}
B\, \ker(C) = B\, \ker(\idty-C) = 0.
\end{equation}
This implies that the sub-algebras $\c A(\ker(C))$ and $\c A(\ker(\idty-C))$ of $\c A(\c H_2)$ are irrelevant for computing the conditional states~(\ref{cond}). There is therefore no loss in generality to assume that the kernels of $C$ and $\idty - C$ are trivial. In this case the positivity conditions can be restated as
\begin{equation}
0 < C < \idty, \enskip B C^{-1} B^* \le A \enskip\text{and}\enskip B (\idty - C)^{-1} B^* \le \idty - A.
\label{pos}
\end{equation}
In these inequalities, even if $C^{-1}$ or $(\idty - C)^{-1}$ are unbounded, $B C^{-1} B^*$ and $B (\idty - C)^{-1} B^*$ extend to bounded operators on $\c H_1$. The positivity conditions~(\ref{pos}) can be recast into  
\begin{equation}
0 \le A \le \idty \enskip\text{and}\enskip 0 \le C \le \idty 
\label{pos21}
\end{equation}
and there exist operators 
\begin{equation}
\begin{split}
&D_i : \c H_2 \to \c H_1,\enskip \norm{D_i} \le 1,\ i = 1,2 \enskip\text{such that} \\ 
&B = A^{\frac{1}{2}} D_1 C^{\frac{1}{2}} = (\idty - A)^{\frac{1}{2}} D_2 (\idty - C)^{\frac{1}{2}}.
\end{split}
\label{pos22}
\end{equation}

Free states are the fermionic version of classical Gaussians. For Gaussians, expectations of a random function multiplied by a Gaussian variable can be expressed as expectations of the derivative of the function with respect to the random variable. The following lemma provides such a formula in the fermionic context. Note that the classical second derivative becomes a combined commutation anti-commutation.

\begin{lemma}
\label{lemma_expect}
For any $Y \in \c A(\c H)$ and $\varphi \in \c H$, we have
\begin{equation}
\omega_Q (a^{*}(\varphi) Y a(\varphi)) = \omega_Q(a^*(\varphi) a(\varphi)) \omega_Q(Y) + \omega_Q \bigl( \bigl\{ a(\varphi) \,,\, \bigl[ a^*(\varphi) \,,\, Y \bigr] \bigr\} \bigr)
\end{equation}
\end{lemma}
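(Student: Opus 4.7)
By linearity in $Y$, the identity reduces to the case where $Y$ is a single normal-ordered monomial $a^*(\varphi_1)\cdots a^*(\varphi_n)\,a(\psi_n)\cdots a(\psi_1)$. If the numbers of creation and annihilation operators in $Y$ are unequal then both sides vanish by gauge invariance of $\omega_Q$, so only the balanced case needs attention.

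For the left-hand side, $a^*(\varphi)\,Y\,a(\varphi)$ is itself a normal-ordered monomial of degree $2(n+1)$, so the determinant characterization of $\omega_Q$ stated above yields $\omega_Q(a^*(\varphi) Y a(\varphi)) = \det M$, where $M$ is obtained from the Gram matrix $[\langle \psi_k, Q\varphi_\ell\rangle]_{k,\ell=1}^n$ by prepending a row $(\langle \varphi, Q\varphi_\ell\rangle)_\ell$ and a column $(\langle \psi_k, Q\varphi\rangle)_k$, with $\langle \varphi, Q\varphi\rangle$ at the new corner. Cofactor expansion along this new row peels off the leading term $\langle \varphi, Q\varphi\rangle\,\omega_Q(Y) = \omega_Q(a^*(\varphi) a(\varphi))\,\omega_Q(Y)$ and leaves a signed sum of $n$ cross-contraction terms, each built from minors of $M$.

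For the anti-commutator term, I iterate the super-Leibniz rule $[A, BC] = \{A,B\}C - B\{A,C\}$ (valid for odd $A$, $B$, $C$) across the $2n$ odd factors of $Y$. Using $\{a^*(\varphi), a^*(\varphi_k)\} = 0$ and $\{a^*(\varphi), a(\psi_j)\} = \langle \psi_j, \varphi\rangle$, this collapses $[a^*(\varphi), Y]$ to a weighted sum $\sum_{j=1}^{n} (-1)^{\epsilon_j}\langle \psi_j,\varphi\rangle\, Y_j$, where $Y_j$ is the monomial obtained from $Y$ by deleting the factor $a(\psi_j)$ and $\epsilon_j$ is the associated Leibniz sign. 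Taking the anti-commutator with $a(\varphi)$ and then the $\omega_Q$-expectation produces a sum of normal-ordered degree-$2n$ expectations, each again evaluable by the determinant formula.

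The final step is to match these with the cross-contraction terms from the cofactor expansion. Structurally both sides simply enumerate the possible Wick contractions of the outer pair $(a^*(\varphi), a(\varphi))$ against $Y$: either they contract with each other, producing the leading term, or each contracts with a single factor of $Y$, producing the ``cross'' terms. The main obstacle I expect is the sign bookkeeping---combining the cofactor signs $(-1)^\ell$, the Leibniz signs $(-1)^{\epsilon_j}$, and the Wick signs that appear when $a(\varphi)$ is commuted past $Y_j$ inside $\{a(\varphi), Y_j\}$---which is where most of the effort will go, even though the structural correspondence is transparent.
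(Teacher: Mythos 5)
Your outline follows the same route as the paper's own proof: reduce to gauge-invariant monomials, expand the left-hand side by the Wick/determinant rule into a leading term plus cross contractions, collapse the double (anti)commutator by the graded Leibniz rule, and match the two. The gap sits exactly in the step you defer as ``sign bookkeeping'': the two sums you propose to match are not equal, and no assignment of signs can make them so. The cofactor expansion of the bordered Gram matrix produces cross terms weighted by the $Q$-mediated contractions $\< \psi_k \,,\, Q\varphi\>$ and $\<\varphi \,,\, Q\varphi_\ell\>$, whereas iterating $[A,BC]=\{A,B\}C-B\{A,C\}$ on $[a^*(\varphi),Y]$ and then taking $\{a(\varphi),\cdot\}$ produces the bare kernels $\<\psi_k,\varphi\>$ and $\<\varphi,\varphi_\ell\>$ with no $Q$. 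Already for $Y=a^*(\varphi_1)a(\psi_1)$ and $Q=\tfrac12\idty$ the cross term on the left is $-\tfrac14\<\psi_1,\varphi\>\<\varphi,\varphi_1\>$ while your anticommutator term gives $-\<\psi_1,\varphi\>\<\varphi,\varphi_1\>$; the identity as displayed therefore cannot be verified by your argument (it is false as written).

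What your computation actually establishes --- and what the paper's proof derives and later uses in proposition~\ref{prop_corr}, where the $\c H_2$-component of $Q\varphi$ becomes $B^*\varphi$ --- is the version with $Q\varphi$ inserted in the second term,
\begin{equation*}
\omega_Q\bigl(a^*(\varphi)Ya(\varphi)\bigr) = \omega_Q\bigl(a^*(\varphi)a(\varphi)\bigr)\,\omega_Q(Y) + \omega_Q\Bigl(\bigl\{a(Q\varphi)\,,\,\bigl[a^*(Q\varphi)\,,\,Y\bigr]\bigr\}\Bigr).
\end{equation*}
The paper gets there by first proving the operator identity $\sum_\ell\varepsilon_\ell\,\omega_Q\bigl(a^*(\varphi)a(\eta_\ell)\bigr)\,a(\eta_n)\cdots\widehat{a(\eta_\ell)}\cdots a(\eta_1)=\bigl[a^*(Q\varphi)\,,\,a(\eta_n)\cdots a(\eta_1)\bigr]_\mp$ together with its conjugate, i.e.\ by building $Q$ into the test vector from the outset so that the CAR anticommutators reproduce exactly the two-point functions appearing in the Wick expansion. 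If you do the same, your matching step goes through and the remaining bookkeeping is precisely the Leibniz and cofactor signs you anticipate. In short: correct strategy, but you must carry the $Q$'s, and the statement of the lemma has to be corrected accordingly.
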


\begin{proof}
We may limit ourselves to gauge-invariant $Y$ due to the gauge-invarian\-ce of the free state. Since we can approximate $Y$ by linear combinations of gauge-invariant monomials in $\c A(\c H)$, it suffices to show the lemma for $Y=a^*(\psi_1) \cdots a^*(\psi_n) a(\eta_n) \cdots a(\eta_1)$. For such $Y$, using the fact that $\omega_Q$ is free, the expression $\omega_Q(a^{*}(\varphi) Y a(\varphi))$ expands to
\begin{equation}
\begin{split}
&\omega_Q \bigl( a^*(\varphi) a(\varphi) \bigr)\, \omega_Q \bigl( a^*(\psi_1) \cdots a^*(\psi_n) a(\eta_n) \cdots a(\eta_1) \bigr) \\
&\qquad+ \sum_{k,\ell} \varepsilon_{k,\ell}\, \omega_Q \bigl( a^*(\varphi) a(\eta_\ell) \bigr)\, \omega_Q \bigl( a^*(\psi_k) a(\varphi) \bigr) \\[-12pt] 
&\phantom{\qquad+ \sum_{k,\ell} \epsilon_{k,\ell}}\ \times \omega_Q \bigl( a^*(\psi_1) \cdots \widehat{a^*(\psi_k)} \cdots a^*(\psi_n) a(\eta_n) \cdots \widehat{a(\eta_\ell)} \cdots a(\eta_1) \bigr). 
\end{split}
\end{equation}
Here $\varepsilon_{k,\ell} = \pm1$, depending on the parity of the permutation needed to put the modes in the original order and $\widehat{a^*(\psi_k)}$ means that the factor $a^*(\psi_k)$ is removed from the product $a^*(\psi_1) \cdots a^*(\psi_n)$.

We now compute by repeated application of eq.~(\ref{CAR})
\begin{align}
&a^*(Q\varphi) a(\eta_n) \cdots a(\eta_1) \\
&\quad= \< \eta_n \,,\, Q \varphi \> a(\eta_{n-1}) \cdots a(\eta_1) - a(\eta_n) a^*(Q\varphi) a(\eta_{n-1}) \cdots a(\eta_1) \\
&\quad= \sum_\ell \varepsilon_{\ell}\, \omega_Q\bigl( a^*(\varphi) a(\eta_\ell) \bigr)\, a(\eta_n) \cdots \widehat{a(\eta_\ell)} \cdots a(\eta_1) \pm a(\eta_n) \cdots a(\eta_1) a^*(Q\varphi),
\end{align}  
with the upper sign for $n$ even and the lower sign for $n$ odd, therefore
\begin{equation}
\sum_\ell \varepsilon_{\ell}\, \omega_Q\bigl( a^*(\varphi) a(\eta_\ell) \bigr)\, a(\eta_n) \cdots \widehat{a(\eta_\ell)} \cdots a(\eta_1) = \bigl[ a^*(Q\varphi) \,,\, a(\eta_n) \cdots a(\eta_1) \bigr]_\mp.
\end{equation}
Using this relation, its conjugate and the anti-commutation relations~(\ref{CAR}), we get the desired result for gauge-invariant monomials and hence for all gauge-invariant elements
\begin{align}
\begin{split}
&\sum_{k,\ell} \varepsilon_{k,\ell}\, \omega_Q \bigl( a^*(\varphi) a(\eta_\ell) \bigr)\, \omega_Q \bigl( a^*(\psi_k) a(\varphi) \bigr) \\[-12pt] 
&\phantom{\qquad+ \sum_{k,\ell} \epsilon_{k,\ell}}\ \times  a^*(\psi_1) \cdots \widehat{a^*(\psi_k)} \cdots a^*(\psi_n) a(\eta_n) \cdots \widehat{a(\eta_\ell)} \cdots a(\eta_1) \\[-12pt] 
&= - \sum_{\ell} \varepsilon_{\ell} \, \omega_Q \bigl( a^*(\varphi) a(\eta_\ell) \bigr)\\[-12pt] 
&\phantom{\qquad+ \sum_{k,\ell} \epsilon_{k,\ell}}\ \times \mp \bigl[ a(Q\varphi) \,,\, a^*(\psi_1) \cdots a^*(\psi_n) \bigr]_\mp a(\eta_n) \cdots \widehat{a(\eta_\ell)} \cdots a(\eta_1) \\[-12pt] 
&= \pm \sum_{\ell}  \varepsilon_{\ell} \, \omega_Q \bigl( a^*(\varphi) a(\eta_\ell) \bigr)\\[-12pt] 
&\phantom{\qquad+ \sum_{k,\ell} \epsilon_{k,\ell}}\ \times \bigl\{ a(Q\varphi) \,,\, a^*(\psi_1) \cdots a^*(\psi_n) a(\eta_n)  \cdots \widehat{a(\eta_\ell)} \cdots a(\eta_1) \bigr\} \\[-12pt]
&=\pm \bigl\{ a(Q\varphi) \,, a^*(\psi_1) \cdots a^*(\psi_n)  \bigl[ a^*(Q\varphi), a(\eta_n) \cdots a(\eta_1) \bigr]_\mp \bigr\} \\
&= \bigl\{ a(Q\varphi) \,, \bigl[ a^*(Q\varphi), a^*(\psi_1) \cdots a^*(\psi_n) a(\eta_n) \cdots a(\eta_1) \bigr] \bigr\}.
\end{split}
\end{align}
\end{proof}

The following proposition bounds the two-point correlations of conditional states. To show this we rely on the equilibrium properties of free states. An equilibrium state $\omega_\beta$ on a C*-algebra $\c A$ is linked to a dynamics in Heisenberg picture through the KMS-condition. Let $\{ \alpha_t \mid t \in \Rl \}$ be a continuous group of automorphisms of $\c A$, then $\omega_\beta$ is an $\alpha$-KMS-state at inverse temperature $\beta > 0$ if there exists for any pair of observables $x,y \in \c A$ a function
\begin{equation}
z \in \Cx \mapsto F_{x,y}(z) \in \Cx
\end{equation}
that is analytic inside the strip $0 < \Im\g m z < \beta$, that extends continuously to the closure of the strip, and such that
\begin{equation}
F_{x,y}(t) = \omega_\beta(\alpha_t(x) y) \enskip\text{and}\enskip F_{x,y}(t+i\beta) = \omega_\beta(y \alpha_t(x)),\enskip t \in \Rl.
\end{equation}
It is straightforward to check that the KMS-states on a finite dimensional full quantum system precisely coincide with the canonical Gibbs states.

Let $0 < Q < \idty$ whereby we mean that for $0 \ne \varphi$
\begin{equation}
0 < \< \varphi \,,\, Q \varphi \> \enskip\text{and}\enskip 0 < \< \varphi \,,\, (\idty - Q) \varphi \>. 
\end{equation}  
The state $\omega_Q$ is then the unique $\alpha$-KMS-state on $\c A(\c H)$ at inverse temperature $\beta=1$ where $\alpha$ is the strongly continuous one-parameter group of automorphisms~\cite{RST}
\begin{equation}
\alpha_t \bigl( a^*(\varphi) \bigr) = a^*\bigl( \r e^{ith} \varphi \bigr),\enskip t \in \Rl
\end{equation}
with
\begin{equation}
h = \ln (\idty - Q) - \ln Q.
\label{ham}
\end{equation}

\begin{proposition}
\label{prop_corr}
With the assumptions and notations of above there exists for any positive, gauge-invariant  $Y \in \c A(\c H_2)$ with $\omega_Q(Y) = 1$ a bounded operator $\tilde A$ on $\c H_1$ such that
\begin{equation}
\omega_Q \bigl( a^*(\varphi) a(\psi) Y \bigr) = \< \psi \,,\, \tilde A \varphi \>,\enskip \varphi,\psi \in \c H_1 
\end{equation} 
and
\begin{equation}
A - B C^{-1} B^* \le \tilde A \le A + B (\idty - C)^{-1} B^*.
\end{equation}
\end{proposition}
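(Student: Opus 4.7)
I would begin by noting that $(\varphi,\psi)\mapsto\omega_Q(a^*(\varphi)a(\psi)Y)$ is a sesquilinear form on $\c H_1$ bounded by $\|\varphi\|\,\|\psi\|\,\|Y\|$, so the Riesz theorem immediately provides a bounded $\tilde A$.

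The main idea is to apply Lemma~\ref{lemma_expect} --- more precisely, the formula actually produced at the end of its proof, in which the anticommutator involves $Q\varphi$ rather than $\varphi$ --- to a vector $\varphi\in\c H_1\subset\c H$. Writing $Q\varphi=A\varphi\oplus B^*\varphi$ in the $\c H_1\oplus\c H_2$ decomposition, the graded structure of $\c A(\c H)$ together with the fact that $Y\in\c A(\c H_2)$ is even (because gauge-invariant) will kill several pieces: $[a^*(A\varphi),Y]=0$; the anticommutator $\bigl\{a(A\varphi),[a^*(B^*\varphi),Y]\bigr\}=0$ as the anticommutator of two odd elements living in different graded factors; and $a^*(\varphi)Ya(\varphi)=a^*(\varphi)a(\varphi)Y$. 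Setting $\eta:=B^*\varphi$, the whole proposition reduces to showing
\begin{equation*}
-\<\eta,C^{-1}\eta\>\;\le\;\omega_C\bigl(\bigl\{a(\eta),\bigl[a^*(\eta),Y\bigr]\bigr\}\bigr)\;\le\;\<\eta,(\idty-C)^{-1}\eta\>,
\end{equation*}
where $\omega_C$ denotes the restriction of $\omega_Q$ to $\c A(\c H_2)$.

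For the lower bound I would apply Lemma~\ref{lemma_expect} a second time, now inside $\c A(\c H_2)$ with symbol $C$: combining its identity for $\omega_C(a^*(\eta')Ya(\eta'))$ with the manifest positivity $\omega_C(a^*(\eta')Ya(\eta'))=\omega_C\bigl((a(\eta'))^*Y(a(\eta'))\bigr)\ge 0$ produces the lower bound after substituting $\eta'=C^{-1}\eta$ on $(\ker C)^\perp$. This is legal because $B^*\varphi\in(\ker C)^\perp$ by the kernel conditions noted just before~(\ref{pos}); the extension from $\zeta\in\ran C$ to its closure exploits continuity of the form in $\zeta$ together with the known boundedness of $BC^{-1}B^*$.

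For the upper bound I would first derive the ``particle-hole mirror'' of Lemma~\ref{lemma_expect},
\begin{equation*}
\omega_C(a(\eta')Ya^*(\eta'))\;=\;\<\eta',(\idty-C)\eta'\>+\omega_C\bigl(\bigl\{a^*((\idty-C)\eta'),[a((\idty-C)\eta'),Y]\bigr\}\bigr),
\end{equation*}
either by rerunning its Wick-contraction proof with $a\leftrightarrow a^*$ (and $\omega_C(a(\varphi)a^*(\psi))=\<\varphi,(\idty-C)\psi\>$ in place of $\<\psi,Q\varphi\>$), or by invoking Lemma~\ref{lemma_expect} for the free state $\omega_{\idty-C}$ via the particle-hole Bogoliubov automorphism of the CAR. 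Positivity $\omega_C(a(\eta')Ya^*(\eta'))\ge 0$ combined with the elementary identity
\begin{equation*}
\bigl\{a(\zeta),[a^*(\zeta),Y]\bigr\}+\bigl\{a^*(\zeta),[a(\zeta),Y]\bigr\}=\{a(\zeta),a^*(\zeta)\}\,Y-Y\{a(\zeta),a^*(\zeta)\}=0
\end{equation*}
(which holds because $\{a(\zeta),a^*(\zeta)\}=\|\zeta\|^2\idty$ is central) then flips the resulting lower bound for $\omega_C(\{a^*(\zeta),[a(\zeta),Y]\})$ into the claimed upper bound. The one non-routine step is producing this mirror of Lemma~\ref{lemma_expect}; the remaining ingredients are the $\Ir_2$-graded commutation bookkeeping of the second paragraph and the kernel/range reductions already carried out in~(\ref{pos21})--(\ref{pos22}).
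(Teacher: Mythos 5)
Your proof is correct, but after the common first step it takes a genuinely different route from the paper's. The reduction itself --- Riesz representation for $\tilde A$, polarization, reading Lemma~\ref{lemma_expect} in the $Q\varphi$ form that its proof actually delivers, and using the grading and the evenness of $Y$ to collapse everything to $\<\varphi, A\varphi\> + \omega_C\bigl(\bigl\{a(B^*\varphi),[a^*(B^*\varphi),Y]\bigr\}\bigr)$ --- is exactly the paper's equation~(\ref{id}). From there the paper invokes the KMS property of $\omega_C$ for the dynamics generated by $h=\ln(\idty-C)-\ln C$: after an approximation by analytic elements it rewrites the correction term as $\omega_C\bigl(\alpha_{-i/2}(Y^{1/2})\,u\,\alpha_{i/2}(Y^{1/2})\bigr)$ with $u=\<\eta,(\idty-C)^{-1}\eta\>\idty-a^*(\zeta)a(\zeta)$ and $\zeta=(\r e^{h/2}+\r e^{-h/2})\eta$, and both inequalities fall out simultaneously from $0\le a^*(\zeta)a(\zeta)\le\norm{\zeta}^2\idty$ together with $\omega_C\bigl(\alpha_{-i/2}(Y^{1/2})\alpha_{i/2}(Y^{1/2})\bigr)=\omega_C(Y)=1$. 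You instead extract the lower bound from the manifest positivity of $\omega_C(a^*(\eta')Ya(\eta'))$ combined with a second application of Lemma~\ref{lemma_expect} at $\eta'=C^{-1}\eta$, and the upper bound from a particle--hole mirror of the lemma together with the identity $\{a(\zeta),[a^*(\zeta),Y]\}=-\{a^*(\zeta),[a(\zeta),Y]\}$; I checked both the mirror formula and this identity, and your two estimates do reproduce $-\<\eta,C^{-1}\eta\>$ and $\<\eta,(\idty-C)^{-1}\eta\>$. What each approach buys: yours is more elementary and self-contained --- no KMS condition, no analytic elements, only Wick contractions and positivity of $x^*Yx$ --- at the price of actually writing out the mirror lemma (the direct rerun of the Wick argument works; the Bogoliubov route needs care with the antiunitary conjugation) and of two small domain arguments, since by~(\ref{pos22}) the vector $B^*\varphi$ lies in $\dom(C^{-1/2})$ and $\dom((\idty-C)^{-1/2})$ but not obviously in $\dom(C^{-1})$, so the substitutions $\eta'=C^{-1}\eta$ and $\eta'=(\idty-C)^{-1}\eta$ must be performed by approximating within $\ran(C)$, respectively $\ran(\idty-C)$, and using norm-continuity of $\eta\mapsto\omega_C(\{a(\eta),[a^*(\eta),Y]\})$, which is the continuity argument you indicate. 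The paper's route concentrates all the analysis in the single KMS rewriting and delivers both bounds in one stroke.
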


\begin{proof}
Since $Y$ commutes with $a(\psi)$ and $\omega_Q(Y)=1$, we can use lemma~\ref{lemma_expect} to get
\begin{align}
\< \psi \,,\, \tilde A \varphi \> &= \omega_Q (a^*(\varphi)Ya(\varphi)) \\
 &= \< \varphi \,,\, A\varphi \> + \omega_C \bigl( \bigl\{ a(B^*\varphi) \,,\, \bigl[ a^*(B^*\varphi) \,,\, Y \bigr] \bigr\} \bigr).
\label{id}
\end{align}

The next step consists in rewriting this identity in such a way that we can use the information $\omega_C(Y) = 1$. This can be achieved through the KMS-condition. Using an approximation argument we may assume that $a^*(B^* \varphi)$ and $Y$ are analytic elements for the automorphism group $\alpha$ of $\c A(\c H_2)$ defined by $C$. For an analytic element $x$ we have
\begin{align}
&\bigl( \alpha_z(x) \bigr)^* = \alpha_{\overline z}(x^*),\enskip z \in \Cx\enskip \text{and} \\
&\omega_C(xY) = \omega_C \bigl(x Y^{\frac{1}{2}} Y^{\frac{1}{2}} \bigr) = \omega_C \bigl( \alpha_{-i}(Y^{\frac{1}{2}}) x Y^{\frac{1}{2}} \bigr) \nonumber \\
&\phantom{\omega_C(xY)}= \omega_C \bigl( \alpha_{-\frac{i}{2}}(Y^{\frac{1}{2}}) \alpha_{\frac{i}{2}}(x) \alpha_{\frac{i}{2}}(Y^{\frac{1}{2}}) \bigr),
\end{align}
and similarly
\begin{align}
&\omega_C(Yx)=\omega_C \bigl( \alpha_{-\frac{i}{2}}(Y^{\frac{1}{2}}) \alpha_{-\frac{i}{2}}(x) \alpha_{\frac{i}{2}}(Y^{\frac{1}{2}}) \bigr), \\
&\omega_C(xYy)=\omega_C \bigl( \alpha_{-\frac{i}{2}}(Y^{\frac{1}{2}}) \alpha_{-\frac{i}{2}}(y) \alpha_{\frac{i}{2}}(x) \alpha_{\frac{i}{2}}(Y^{\frac{1}{2}}) \bigr).
\end{align}
This allows us to rewrite~(\ref{id}) as
\begin{equation}
\< \varphi \,,\, \tilde A\varphi \> = \< \varphi \,,\, A\varphi \> + \omega_C \bigl( \alpha_{-\frac{i}{2}}(Y^{\frac{1}{2}}) u \alpha_{\frac{i}{2}}(Y^{\frac{1}{2}})\bigr)
\end{equation} 
with
\begin{equation}
u = \bigl\< B^* \varphi \,,\, \bigl( \idty + \r e^{-h} \bigr) B^* \varphi \bigr\> - a^* \bigl( \r e^{\frac{h}{2}} B^*\varphi + \r e^{-\frac{h}{2}} B^*\varphi \bigr) a \bigl( \r e^{\frac{h}{2}} B^*\varphi + \r e^{-\frac{h}{2}} B^*\varphi \bigr).
\end{equation}
Here $h$ is the single mode Hamiltonian as in~(\ref{ham}) replacing $Q$ by $C$. Using 
\begin{equation}
0 \le a^*(\zeta) a(\zeta) \le \norm \zeta^2 \idty
\end{equation} 
we obtain the statement of the proposition
\begin{equation}
A - B C^{-1} B^* \le \tilde A \le A + B (\idty - C)^{-1} B^*. 
\end{equation}
\end{proof}

Obviously, the two-point correlations of states in $S_1^{\text{free}}$ also satisfy these bounds. Since the two-point correlations of a free state $\omega_Q$ are encoded in its symbol $Q$, the operator $Q$ will satisfy the bounds given for $\tilde A$ in proposition \ref{prop_corr}. In proposition~\ref{pro2}, we show that the converse is also true, i.e. that every free state whose two-point correlations satisfy the given bounds is contained in the weak$^*$-closure of $S_1^{\text{free}}$.

To prove this statement, we use conditional states generated by an exponential element $Y$ in $\c A(\c H_2)$.

\begin{lemma}
\label{lemma_exp_cond}
If $\omega_Q$ is a free state on $\c A(\c H)$ with symbol $Q$ as in~(\ref{sym}) and $Y = E(L)/\omega_Q(E(L))$ is an exponential element in $\c A(\c H_2)$ with $L \ge 0$, then the conditional state $\tilde \omega: X \mapsto \omega_Q (XY)$ is a free state on $\c A(\c H_1)$ with symbol
\begin{equation}
\label{cond_symb}
\tilde A = A - B(L-\idty)(\idty -C+CL)^{-1}B^*.
\end{equation}
\end{lemma}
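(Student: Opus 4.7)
The plan is to exploit the characterization~(\ref{gi-free}): a gauge-invariant free state on $\c A(\c K)$ with symbol $P$ is uniquely specified by its value on exponential elements, namely $\r E(X) \mapsto \det(\idty - P + PX)$ for $X \in \idty + \c T_1(\c K)$. Hence, to identify $\tilde\omega$ as the free state with symbol $\tilde A$, it suffices to verify
\[
\tilde\omega\bigl(\r E(X_1)\bigr) = \det\bigl(\idty - \tilde A + \tilde A\, X_1\bigr), \qquad X_1 \in \idty + \c T_1(\c H_1).
\]

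First, I would transfer the product $\r E(X_1)\, Y$ to a single $\r E$-element on the full space $\c H = \c H_1 \oplus \c H_2$. A direct check gives $\alpha_z(\Gamma(A)) = \Gamma(A)$, so the identity $\r E(\exp A) = \exp \Gamma(A)$ combined with the multiplicativity of $\r E$ places every $\r E$-element in the gauge-invariant (hence even) subalgebra. Even elements from the two graded factors commute in $\c A(\c H)$, and under the embedding $\c A(\c H_i) \hookrightarrow \c A(\c H)$ one has $\r E(X_1) = \r E(X_1 \oplus \idty)$ and $\r E(L) = \r E(\idty \oplus L)$. Multiplicativity then yields
\[
\r E(X_1)\,\r E(L) = \r E\bigl((X_1 \oplus \idty)(\idty \oplus L)\bigr) = \r E(X_1 \oplus L).
\]

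Next I would apply~(\ref{gi-free}) and compute the block-matrix Fredholm determinant. With the block form of $Q$,
\[
\idty - Q + Q\,(X_1 \oplus L) = \begin{bmatrix} \idty + A(X_1 - \idty) & B(L - \idty) \\ B^*(X_1 - \idty) & \idty + C(L-\idty) \end{bmatrix}.
\]
Setting $N := \idty - C + CL$, the specialization $X_1 = \idty$ gives $\omega_Q(\r E(L)) = \det N$, which is non-zero since $Y$ is assumed well-defined; so $N$ is invertible and the Schur-complement factorization applies. The resulting Schur complement rearranges as
\[
\idty + A(X_1 - \idty) - B(L-\idty)\,N^{-1}\,B^*(X_1 - \idty) = \idty + \tilde A (X_1 - \idty) = \idty - \tilde A + \tilde A\, X_1,
\]
with $\tilde A$ precisely as stated. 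Dividing $\omega_Q(\r E(X_1 \oplus L)) = \det N \cdot \det(\idty - \tilde A + \tilde A\, X_1)$ by $\omega_Q(\r E(L)) = \det N$ delivers the required formula for $\tilde\omega(\r E(X_1))$.

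The main hurdle is bookkeeping rather than deep analysis: (i)~justifying that $\r E$-elements are gauge-invariant so that they commute across the graded tensor product, which one reduces to exponentials via $\r E(\exp A) = \exp\Gamma(A)$ and the multiplicative closure thereof; (ii)~confirming invertibility of $N$, forced by non-vanishing of $\omega_Q(\r E(L))$. That $\tilde A$ is itself a valid symbol, i.e.\ $0 \le \tilde A \le \idty$, comes for free once the exponential generating-functional formula holds, because $\tilde\omega$ is by construction a state on $\c A(\c H_1)$ and hence its symbol must satisfy the positivity constraints.
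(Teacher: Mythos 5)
Your proof is correct and follows essentially the same route as the paper: evaluate $\tilde\omega$ on the exponential elements $\r E(K)$ using the determinant formula~(\ref{gi-free}) for $\omega_Q(\r E(K)\r E(L)) = \det(\idty - Q + Q(K\oplus L))$, then factor the block determinant via the Schur complement of $\idty - C + CL$ to read off $\tilde A$. The extra care you take in justifying $\r E(X_1)\r E(L) = \r E(X_1\oplus L)$ and the invertibility of $\idty - C + CL$ is left implicit in the paper but is a welcome addition.
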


\begin{proof}
We calculate the expectation value of elements $X=E(K)$ with $K$ an operator on $\c H_1$ in the state $\tilde \omega$. Since these elements $E(K)$ span the gauge invariant CAR algebra~\cite{DFP}, these values determine the state $\tilde{\omega}$.

First we determine the normalization factor $\omega_Q(E(L))$ by using eq.~(\ref{gi-free})
\begin{align}
	\omega_Q (E(L)) &= \det( \idty - Q + Q ( \idty \oplus L)) \\
				& = \det \begin{pmatrix} \idty & -B + BL \\ 0 & \idty -C + CL \end{pmatrix} \\
				& = \det(\idty -C +CL).
\end{align}
Likewise, we have that
\begin{align}
\omega_Q(E(K)E(L))&=  \det( \idty - Q + Q ( K \oplus L)) \\
			& =  \det \begin{pmatrix} \idty - A + AK & -B + BL \\ -B^* + B^* K & \idty -C + CL \end{pmatrix} \\
			& = \det(\idty -C +CL) \det(\idty - \tilde{A} + \tilde{A}K)
\end{align}
with
\begin{equation}
\tilde{A}=A-B(L-\idty)(\idty -C +CL)^{-1} B^*.
\end{equation}
Hence, $\tilde{\omega}$ is a free state with symbol $\tilde{A}$
\begin{equation}
\tilde{\omega} (E(K)) = \frac{\omega_Q(E(K)E(L))}{\omega_Q (E(L))} =  \det(\idty - \tilde{A} + \tilde{A}K) = \omega_{\tilde{A}} (E(K)).
\end{equation}
\end{proof}

\begin{lemma}
\label{lem2}
Let $0 < \varepsilon < 1$ and let $\tilde A$ be an operator on $\c H_1$ such that $A - \tilde A$ is of finite rank and such that
\begin{equation}
A - (1-\varepsilon) B C^{-\frac{1}{2}} B^* \le \tilde A \le A + (1-\varepsilon) B (\idty - C)^{-\frac{1}{2}} B^*,
\label{lem2:1}
\end{equation}
then there exists a positive $Y \in \c A(\c H_2)$ such that
\begin{equation}
\omega_{\tilde A}(X) = \omega_Q(XY),\enskip X \in \c A(\c H_1).
\end{equation} 
\end{lemma}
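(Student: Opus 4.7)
The plan is to realize $\omega_{\tilde A}$ as a conditional state generated by an exponential element $Y = \r E(L)/\omega_Q(\r E(L))$ with $L > 0$ on $\c H_2$ and $L - \idty$ of finite rank, so that Lemma~\ref{lemma_exp_cond} applies directly. The task reduces to finding such $L$ satisfying
\begin{equation*}
\tilde A = A - B(L - \idty)(\idty - C + CL)^{-1} B^*.
\end{equation*}
Introducing the auxiliary self-adjoint $S := (L - \idty)(\idty - C + CL)^{-1}$, the target equation becomes the factorization problem $BSB^* = A - \tilde A$, and inversion yields $L = \idty + (\idty - SC)^{-1} S$, which is self-adjoint by a direct computation using self-adjointness of $S$ and $C$.

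The first step is therefore to exhibit a finite-rank, self-adjoint $S$ realizing $BSB^* = A - \tilde A$ and meeting the sandwich bounds $\norm{C^{1/2} S_+ C^{1/2}} \le 1 - \varepsilon$ and $\norm{(\idty - C)^{1/2} S_- (\idty - C)^{1/2}} \le 1 - \varepsilon$ on its positive and negative parts $S_\pm$. The decompositions $B = A^{1/2}D_1 C^{1/2} = (\idty - A)^{1/2}D_2(\idty - C)^{1/2}$ from~(\ref{pos22}) make $BC^{-1/4} = A^{1/2}D_1 C^{1/4}$ and $B(\idty - C)^{-1/4} = (\idty - A)^{1/2}D_2(\idty - C)^{1/4}$ bounded operators even when $C^{-1/4}$ and $(\idty - C)^{-1/4}$ are not. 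Applying a Douglas-type lifting of the positive and negative parts of $A - \tilde A$ against these bounded operators, restricted to the finite-dimensional subspace on which $A - \tilde A$ lives, produces the desired $S$; the $(1-\varepsilon)$ slack and the exponents $-1/2$ in~(\ref{lem2:1}) translate, after absorbing a factor of $C^{1/2}$ or $(\idty - C)^{1/2}$, into exactly the stated norm bounds.

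Once $S$ is in hand, positivity of $L$ follows from a spectral argument. Since $0 < C$, $C^{1/2}$ has dense range and $SC$ is similar on that range to the self-adjoint operator $C^{1/2}SC^{1/2}$, whose spectrum lies in $[-(1-\varepsilon), 1-\varepsilon]$; hence $\idty - SC$ is invertible with spectrum in $[\varepsilon, 2-\varepsilon]$. Symmetrically, $\idty + S(\idty - C)$ is bounded below by $\varepsilon\idty$. Combining these, the self-adjoint operator $L = (\idty - SC)^{-1}(\idty + S(\idty - C))$ is strictly positive, $L - \idty = (\idty - SC)^{-1}S$ has the finite rank of $S$, and $\log L$ is a self-adjoint trace-class operator. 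Therefore $\r E(L) = \exp \Gamma(\log L)$ is a strictly positive element of $\c A(\c H_2)$, and taking $Y = \r E(L)/\omega_Q(\r E(L))$ and invoking Lemma~\ref{lemma_exp_cond} yields $\omega_{\tilde A}(X) = \omega_Q(XY)$ for all $X \in \c A(\c H_1)$.

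The principal obstacle is the simultaneous factorization-plus-norm-control for $S$ in the first step: one must handle possible kernels of $B$ (or of $D_1,D_2$) within the Douglas lifting, and verify that the $(1-\varepsilon)$ slack in the hypothesis really suffices both to realize the factorization and, through the spectral argument, to guarantee strict positivity of $L$.
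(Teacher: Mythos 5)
Your proof follows the same route as the paper: reduce to Lemma~\ref{lemma_exp_cond} with an exponential element $\r E(L)$, invert the relation between the perturbation and $L$ (your $L=\idty+(\idty-SC)^{-1}S$ is exactly the paper's $L=\idty+N$, $N=-K(\idty+CK)^{-1}$ with $K=-S$), and use the $(1-\varepsilon)$ slack to get invertibility of $\idty-SC$. You go beyond the paper on two points that its proof silently skips: (i) the paper simply \emph{restricts} to symbols of the form $\tilde A=A+BKB^*$ with $K$ finite rank, never showing that every $\tilde A$ with $A-\tilde A$ finite rank and satisfying~(\ref{lem2:1}) admits such a representation, whereas you pose this as an explicit factorization problem $BSB^*=A-\tilde A$ with norm control on $S_\pm$; and (ii) the paper proves only that $\idty+CK$ is invertible and never checks $L\ge 0$, which Lemma~\ref{lemma_exp_cond} requires, whereas you at least address positivity. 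However, both of your added steps remain sketches with real wrinkles. In the Douglas lifting, the canonical contraction $W$ with $\Delta_\pm^{1/2}=\sqrt{1-\varepsilon}\,B(\idty-C)^{\mp 1/4}W$ has range in the closure of $\ran\bigl((\idty-C)^{\mp 1/4}B^*\bigr)$, which need not lie in the domain of $(\idty-C)^{\mp 1/4}$, so boundedness (indeed well-definedness) of the resulting $S_\pm$ is not automatic even though $\Delta$ is finite rank; this needs an argument. And your positivity step is not yet a proof: a self-adjoint operator that factors as a product of two non-self-adjoint, non-commuting operators each with spectrum in $(0,\infty)$ need not be positive, so ``combining these'' requires something more, e.g.\ the homotopy $S\mapsto tS$, $t\in[0,1]$, under which $L_t$ stays self-adjoint and invertible (the $\varepsilon$-bounds are preserved), so that $\sigma(L_t)$ cannot cross $0$ and $L_1>0$ follows from $L_0=\idty$. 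With those two points filled in, your argument would actually be more complete than the one printed in the paper.
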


\begin{proof}
We consider the set of operators
\begin{equation}
\label{finrank_symb}
\tilde A = A + BKB^*
\end{equation}
with $K$ a finite rank operator on $\c H_2$ such that $\tilde A$ satisfies the bounds~(\ref{lem2:1}). This is the case if
\begin{equation}
- (1-\varepsilon) C^{-1} \le K \le (1-\varepsilon) (\idty - C)^{-1}.
\end{equation}

Using lemma~\ref{lemma_exp_cond}, we obtain the free state with symbol $\tilde A$ as the conditional state $X \mapsto \omega_Q(XY)$ with 
\begin{equation}
Y = \frac{1}{\omega_Q(E(L))}\, E(L) \in \c A(\c H_2) 
\end{equation}
if we are able to find a positive operator $L$ on $\c H_2$ such that
\begin{equation}
K = (\idty-L)(\idty -C +CL)^{-1} \enskip\text{and}\enskip \idty - L \text{ finite rank}.
\end{equation}
Rewriting this in terms of a finite rank operator $N$, such that $L=\idty + N$, we have
\begin{equation}
K = -N(\idty + CN)^{-1}.
\end{equation}
If $\idty + CK$ is invertible, this equation is solved by
\begin{equation}
N=-K(\idty + CK)^{-1}.
\end{equation}

To show that $\idty + CK$ is invertible, assume that $\varphi \in \ker \bigl\{ \idty +KC \bigr\}$. This means that
\begin{equation}
\< C\varphi, \varphi \> + \<C\varphi, KC \varphi\> = 0
\end{equation}
and
\begin{equation}
0 \ge \< C\varphi, \varphi \> - \< C \varphi, (1-\varepsilon) C^{-1}C\varphi\> = \varepsilon \<\varphi, C\varphi\>.
\end{equation}
Hence $\ker\bigl\{\idty + KC \bigr\} = \bigl\{ 0 \bigr\}$. Therefore, as $CK$ is of finite rank, $\ran(\idty + CK) = \c H_2$. Furthermore $\ker\bigl\{\idty + CK \bigr\} = \bigl\{ 0 \bigr\}$ as well and so $\idty + CK$ is invertible.
\end{proof}

\begin{proposition}
\label{pro2}
The weak$^*$-closure of the set $\c S_1^{\text{free}}$ of conditioned free states on $\c A (\c H_1)$ coincides with the set of free states on $\c A(\c H_1)$ whose symbols $\tilde A$ satisfy 
\begin{equation}
A - B C^{-\frac{1}{2}} B^* \le \tilde A \le A + B (\idty - C)^{-\frac{1}{2}} B^*.
\label{pro2:1}
\end{equation}
\end{proposition}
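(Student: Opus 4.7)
For the inclusion of the weak$^*$-closure of $\c S_1^{\text{free}}$ in the set of free states satisfying (\ref{pro2:1}), let $\omega_{\tilde A} \in \c S_1^{\text{free}}$. By definition there is a positive, gauge-invariant $Y \in \c A(\c H_2)$ with $\omega_Q(Y) = 1$ such that $\omega_{\tilde A}(X) = \omega_Q(XY)$; taking $X = a^*(\varphi)a(\psi)$ reads off the two-point function $\<\psi, \tilde A \varphi\>$, which is exactly the object bounded in Proposition~\ref{prop_corr}, so $\tilde A$ satisfies (\ref{pro2:1}). The weak$^*$ topology on gauge-invariant free states (all with $0 \le Q \le \idty$, hence uniformly bounded) agrees with the weak operator topology on the symbols, because the determinantal formula (\ref{gi-free}) is polynomial in the matrix entries $\<\psi_k, Q \varphi_\ell\>$; operator inequalities persist in that limit, so the set of free states with symbols satisfying (\ref{pro2:1}) is weak$^*$-closed and thus contains the closure of $\c S_1^{\text{free}}$.

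\textbf{Hard direction.} Conversely, fix a free state $\omega_{\tilde A}$ with $\tilde A$ obeying (\ref{pro2:1}) and set $T := \tilde A - A$. The plan is to approximate $\tilde A$ in weak operator topology by symbols of the form $\tilde A_n = A + B K_n B^*$, where $K_n$ is a finite-rank self-adjoint operator on $\c H_2$ satisfying the strict bound
\begin{equation*}
-(1-\varepsilon_n) C^{-1} \le K_n \le (1-\varepsilon_n)(\idty - C)^{-1}
\end{equation*}
for some $\varepsilon_n \downarrow 0$. Each such $\tilde A_n$ is admissible ($0 \le \tilde A_n \le \idty$ follows from (\ref{pos})) and fits the hypothesis of Lemma~\ref{lem2}, so there exists a positive $Y_n \in \c A(\c H_2)$ with $\omega_{\tilde A_n}(X) = \omega_Q(X Y_n)$, placing $\omega_{\tilde A_n}$ in $\c S_1^{\text{free}}$. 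Weak-operator convergence $\tilde A_n \to \tilde A$ then lifts to weak$^*$-convergence $\omega_{\tilde A_n} \to \omega_{\tilde A}$ by the same determinantal argument as above, exhibiting $\omega_{\tilde A}$ in the closure.

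\textbf{Building the $K_n$.} The key intermediate step is a Douglas-type factorization: one seeks a self-adjoint $K$ on $\overline{\ran B^*} \subseteq \c H_2$ with $T = B K B^*$ and $-C^{-1} \le K \le (\idty - C)^{-1}$. The factorizations $B = A^{1/2} D_1 C^{1/2} = (\idty - A)^{1/2} D_2 (\idty - C)^{1/2}$ from (\ref{pos22}) recast the hypotheses (\ref{pro2:1}) as domination by positive quadratic forms that factor through $B^*$, from which $K$ is recovered on $\overline{\ran B^*}$ (setting $K$ to zero on the orthogonal complement). Finite-rank approximants are then produced by compressing $K$ to an increasing sequence of finite-dimensional subspaces of $\overline{\ran B^*}$ and multiplying by $1-\varepsilon_n$ to convert the weak inequality into the strict bound required by Lemma~\ref{lem2}; the rescaled compressions converge weakly to $K$, hence $\tilde A_n \to \tilde A$ in weak operator topology.

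\textbf{Main obstacle.} The delicate point is the factorization $T = BKB^*$, since $C^{-1}$ and $(\idty-C)^{-1}$ may be unbounded and $T$ is not prescribed to be positive but only to lie between $-BC^{-1}B^*$ and $B(\idty-C)^{-1}B^*$. One therefore has to split the two-sided bound along the decompositions of $B$ through $A^{1/2}$ and $(\idty-A)^{1/2}$, apply Douglas' lemma to each inequality separately, and reassemble the pieces into a single self-adjoint $K$ whose spectrum simultaneously satisfies both bounds, taking care of $\ker B$ (which contributes nothing to $T$) and of the fact that $C$ need not commute with the operators arising from the factorization.
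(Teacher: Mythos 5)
Your proposal is correct and takes essentially the same route as the paper, whose entire proof is two sentences (weak$^*$-convergence of free states is weak convergence of symbols, plus Proposition~\ref{prop_corr} for one inclusion and Lemma~\ref{lem2} for the other); your ``Building the $K_n$'' paragraph supplies the factorization/density step that the paper leaves implicit, and it does go through --- after checking that $\tilde A - A$ vanishes on $\ker B^*$, a single application of Douglas' lemma to $0 \le \tilde A - A + BC^{-1}B^* \le B\bigl(C^{-1}+(\idty-C)^{-1}\bigr)B^*$ produces the required $K$. The only caveat, inherited from the paper rather than introduced by you, is an exponent mismatch: Proposition~\ref{prop_corr} and the computation inside the proof of Lemma~\ref{lem2} yield the bounds with $C^{-1}$ and $(\idty-C)^{-1}$, not $C^{-\frac{1}{2}}$ and $(\idty-C)^{-\frac{1}{2}}$ as written in~(\ref{pro2:1}) and~(\ref{lem2:1}), so your assertion that Proposition~\ref{prop_corr} gives ``exactly'' the bound~(\ref{pro2:1}) is accurate only once that typo is resolved consistently.
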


\begin{proof}
For free states, weak$^*$-convergence is equivalent to weak convergence of their symbols. The proof then immediately follows from proposition~\ref{prop_corr} and lemma~\ref{lem2}.
\end{proof} 

There is a notion of gauge-invariant, free, completely positive, identity preserving maps between CAR algebras that naturally extends that of free states. Such a map $\Gamma: \c A(\c H) \to \c A(\c K)$ is determined by operators
\begin{equation}
R: \c H \to \c K \enskip\text{and}\enskip S: \c H \to \c H
\end{equation}
that satisfy
\begin{equation}
0 \le S \le \idty - R^*R.
\end{equation}
The action of the map on a gauge-invariant monomial of order two is given by
\begin{equation}
\Gamma(a^*(\varphi) a(\psi)) = a^*(R\varphi) a(R\psi) + \< \psi \,,\, S\,\varphi \>,\enskip \varphi,\psi \in \c H.
\label{qfmap}
\end{equation}
Moreover, the pull-back $\omega_Q \circ \Gamma$ of a free state on $\c A(\c K)$ is a free state on $\c A(\c H)$. For more details, we refer to~\cite{DFP}.

As in eq.~(\ref{deco2}), we can write the free conditional states as generalized subsystems, using a free completely positive map to a suitable operator algebra and free states on the target algebra.

\begin{proposition}
\label{pro3}
There exists a unique, free, minimal, identity preserving, completely positive map $\Gamma$ such that the weak$^*$-closure of $\c S^{\text{free}}_1$ is the pull-back of the free states by $\Gamma$. 
\end{proposition}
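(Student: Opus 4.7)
The plan is to read off the parameters $(R,S)$ of $\Gamma$ from the order interval of symbols described in Proposition~\ref{pro2}. A gauge-invariant free identity-preserving CP map $\Gamma:\c A(\c H_1)\to\c A(\c K)$ is parameterised by operators $R:\c H_1\to\c K$ and $S:\c H_1\to\c H_1$ with $0\le S\le\idty-R^*R$. Applying~(\ref{qfmap}) together with the two-point formula for the free state $\omega_P$ on $\c A(\c K)$ gives
\begin{equation}
(\omega_P\circ\Gamma)(a^*(\varphi)a(\psi)) = \<R\psi,PR\varphi\> + \<\psi,S\varphi\> = \<\psi,(R^*PR+S)\varphi\>,
\end{equation}
so $\omega_P\circ\Gamma$ is the free state with symbol $R^*PR+S$. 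Matching pull-backs with the set of Proposition~\ref{pro2} therefore reduces to matching one-particle symbols.

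Next, I would show that as $P$ ranges over $\{0\le P\le\idty_{\c K}\}$ the operator $R^*PR$ sweeps the full order interval $[0,R^*R]$. Using the polar decomposition $R=V|R|$, any $X$ with $0\le X\le R^*R=|R|^2$ can be written as $|R|Q|R|$ with $0\le Q\le\idty$ on $\overline{\ran|R|}$, and $Q$ can be transported through $V$ to a symbol $P=VQV^*$ on $\overline{\ran R}$, extended by zero on the orthogonal complement. Consequently the set of pull-back symbols is precisely the order interval $[S,\,S+R^*R]$.

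Equating this interval with the set described in Proposition~\ref{pro2} forces
\begin{equation}
S := A - BC^{-\frac{1}{2}}B^*,\qquad R^*R := B\bigl[C^{-\frac{1}{2}} + (\idty-C)^{-\frac{1}{2}}\bigr]B^*,
\end{equation}
and I then take $R$ to be the positive square root of $R^*R$ regarded as a map into $\c K:=\overline{\ran R}$, which is the minimal choice. The admissibility constraint $0\le S\le\idty-R^*R$ reduces to $\tilde A\ge 0$ and $\tilde A\le\idty$ at the two extremes of the interval, both of which follow from the original positivity conditions~(\ref{pos}) once one uses the operator inequalities $C^{-\frac12}\le C^{-1}$ and $(\idty-C)^{-\frac12}\le(\idty-C)^{-1}$, valid because $0<C<\idty$ makes the relevant scalar functions monotone in the right direction.

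For uniqueness, observe that $S$ is the infimum and $R^*R$ the width of the order interval of admissible symbols, so both are determined by the weak$^*$-closure of $\c S^{\text{free}}_1$ alone. Given $R^*R$, the operator $R$ is fixed up to a left unitary on $\c K$, and combined with the minimality requirement $\c K=\overline{\ran R}$ this residual freedom is exactly a Bogoliubov isomorphism of the target CAR algebra; hence $\Gamma$ is unique as a free CP map. The main obstacle is really the sweep assertion of step~two: it is where the minimal choice of $\c K$ matters, and where one must verify via the polar decomposition that every element of $[0,R^*R]$ is genuinely attained by some symbol $P$ — once this is in place, the rest of the proof is bookkeeping against Proposition~\ref{pro2}.
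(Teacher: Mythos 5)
Your proposal is correct and follows essentially the same route as the paper's (very terse) proof: parameterise $\Gamma$ by $(R,S)$, identify the pull-back symbols as the order interval $[S,\,S+R^*R]$, and match its endpoints with the interval of Proposition~\ref{pro2} to force $S=A-BC^{-\frac{1}{2}}B^*$ and $R^*R=BC^{-\frac{1}{2}}B^*+B(\idty-C)^{-\frac{1}{2}}B^*$, with $R$ determined up to a unitary onto $\c K=\overline{\ran R}=\overline{\ran B}$. In fact you supply several details the paper leaves implicit, notably the Douglas-lemma argument that $R^*PR$ sweeps all of $[0,R^*R]$ and the verification of $0\le S\le\idty-R^*R$ from the positivity conditions~(\ref{pos}).
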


\begin{proof}
Let $\c K = \overline{\ran(B)} \subset \c H_1$. We construct operators 
\begin{equation}
R: \c H_1 \to \c K \enskip\text{and}\enskip S: \c H_1 \to \c H_1 
\end{equation}
such that
\begin{equation}
0 \le S \le \idty - R^*R. 
\end{equation}
These operators define a completely positive, free, identity preserving map $\Gamma$ from $\c A(\c H_1)$ to $\c A(\c K)$ as in~(\ref{qfmap}). The pull-back of the free states on $\c A(\c K)$ consists of the free states on $\c A(\c H_1)$ with symbols
\begin{equation}
\{ \tilde A = R^*T\,R + S \mid 0 \le T \le \idty \}.
\label{pro3:2}
\end{equation}
We need to show that the set~(\ref{pro3:2}) coincides with~(\ref{pro2:1}). This is the case if and only if 
\begin{equation}
R = U\, \sqrt{B\, C^{-\frac{1}{2}} B^* + B\, (\idty - C)^{-\frac{1}{2}} B^*} \enskip\text{and}\enskip S = A - B\, C^{-\frac{1}{2}} B^*.
\end{equation} 
In this expression $U$ is an arbitrary unitary on $\c K$.
\end{proof}

\noindent
\textbf{Acknowledgements} \\
This work is partially funded by the Belgian Interuniversity Attraction Poles Programme P6/02.

\end{document}